\documentclass[letterpaper, 10 pt, conference]{ieeeconf}  
\usepackage{cite}

\IEEEoverridecommandlockouts                              

\usepackage{graphics} 
\usepackage{float} 
\usepackage{epsfig} 
\usepackage{mathptmx} 
\usepackage{times} 
\usepackage{amsmath} 
\usepackage{amssymb}  
\usepackage{url}
\usepackage{cancel}
\usepackage{xcolor}
\usepackage{ifthen}
\usepackage{bm}
\usepackage{verbatim}
\usepackage{multicol}
\newboolean{showcomments}
\setboolean{showcomments}{true}

\newtheorem{proposition}{Proposition}

\newcommand{\dennice}[1]{\ifthenelse{\boolean{showcomments}}
{\textcolor{blue}{Dennice says: #1}}{}}
\newcommand{\addcites}[0]{\ifthenelse{\boolean{showcomments}}
{\textcolor{purple}{(add cite(s))}}{}}
\newcommand{\steve}[1]{\ifthenelse{\boolean{showcomments}}
{\textcolor{red}{Steve says: #1}}{}}
\begin{document}

\title{\LARGE \bf
From Cut-In to Rated: Multi-Region Floating Offshore Wind Farm Control for Secondary Frequency Regulation}

\author{Stephen Ampleman and Dennice F. Gayme
\thanks{This work was partially supported by the U.S. Department of Energy, Office of Science Energy Earthshot Initiative, under the Addressing Challenges in Energy—Floating Wind in a Changing Climate (ACE-
FWICC) Energy Earthshot Research Center. Partial support from the National Science Foundation (CBET 2401013 \& OISE 2330450) is also gratefully acknowledged. }
\thanks{Stephen Ampleman and Dennice F. Gayme are with the Department of Mechanical Engineering and the Ralph O'Connor Sustainable Energy Institute (ROSEI) at Johns Hopkins University.}
}

\maketitle
\thispagestyle{empty}
\pagestyle{empty}

\begin{abstract}
This paper describes a multi-region control framework for floating offshore wind farms. Specifically, we propose a novel generator torque controller that regulates rotor speed in Region 2, corresponding to wind speeds between the cut-in and rated values. In Region 3 (wind speeds at or above rated but below cut-out speed) we employ a PI-LQR for collective blade pitch. Control blending across the transitional wind speeds (Region 2.5) employs a sigmoid weighting function applied to the control variables. Two modeling paradigms are proposed for farm-level power tracking with rotor speed regularization: a nonlinear model predictive controller (NL-MPC) with a dynamic wake model, and a reduced order model predictive controller based on linear parameter varying turbine models with a time delay representation of wake advection (LPVTD-MPC). These approaches are evaluated over three wind inlet conditions using the PJM ancillary service certification criteria for participation in a secondary frequency regulation market. Results show that both approaches achieve scores of at least 89.9\% for the three different testing scenarios, which are well above the qualification threshold of 75\%. However, the LPVTD-MPC approach solves the problem in under half the time versus NL-MPC but with slightly larger fluctuations in farm-level power output, highlighting the trade-off between performance and computational tractability. The control framework is among the first to address multi-region wind turbine dynamics together with market driven power tracking objectives for floating offshore wind farms. Such multi-region control becomes increasingly necessary in the floating turbine setting where large (region spanning) wind speed variations are common due to wave induced platform pitching.


\end{abstract}

\section{Introduction}

Wind is a highly variable resource that exhibits significant differences in speed and direction over temporal and spatial scales relevant to wind farm operation~\cite{2015_DraxlCliftonMcCaa_OverviewAndMeteorologicalValidationOfTheWindIntegrationNationalDataset}. Spatial variations can result in vast differences in the operational speeds for individual turbines within a single farm, whereas temporal shifts lead to fluctuations in each turbine's power output. Wind speeds are used to define three turbine operating regimes. In Region 1, wind speeds are lower than the cut-in speed. Region 2 comprises the cut-in speed to rated wind speed, while Region 3 spans the rated wind speed to the cut-out speed~\cite{2011PaoJohnson_ControlOfWindTurbines}. Turbine control approaches differ across  regions. In Region 1 the wind speed is insufficient to generate power, so the controller maintains a fixed rotor. Region 2  power maximization controllers typically adjust generator torque to track the maximum power coefficient \cite{2024_StockhouseZalkindRossPao_AnalysisOfPowerMaximizingRegion2Controllers}, whereas Region 3 control adjusts turbine blade pitch to regulate rotor speed and maintain rated (maximum) power \cite{2006_JohnsonPaoBalasFingersh_ControlOfVariableSpeedWindTurbines, 2022_Abbas_ROSCO}.

Wind farm control requires collective approaches for an array of turbines, whose individual control actions affect the wind speed experienced by downstream turbines in their wakes. Wake interactions render turbine centric control, such as turbine-level maximum power tracking (known as the greedy approach) suboptimal, i.e. greedy control produces lower overall farm power versus collective approaches \cite{2024_FrederikEtAl_ComparisonOfWindFarmControlStrategies_Turbine}. Accounting for wake interactions is of particular importance in power tracking applications (e.g. for wind farms providing secondary frequency regulation), where turbine set-points fall below the maximum power point and the desired trajectory changes at timescales similar to the inter-turbine travel time (propagation of the wake from one turbine to the next) ~\cite{2017Shapiro_ModelBasedRecedingHorizonControlOfWindFarmsForSecondaryFrequencyRegulation,2022Shapiro_TurbulenceAndControlOfWindFarms, 2016_FlemingAhoGebraadPaoZhang_CFDSimulationStudyOfActivePowerControl}. Farm-level power tracking therefore requires turbines to account for wake interactions, as well as possibly operate below their maximum power point (de-rate) to maintain the control authority required to increase the total farm power level in response to the desired trajectory. 

Some early farm-level power tracking research employed dynamic wake models to lower the de-rates necessary to achieve tracking performance goals~\cite{2017Shapiro_ModelBasedRecedingHorizonControlOfWindFarmsForSecondaryFrequencyRegulation}. This work and others using high-fidelity simulation models~\cite{2016GoitMeyers_OptimalCoordinatedControlOfPowerExtrationInLES} provided key insights regarding how real-time control actions affect wakes as they propagate through the farm. However, they used simplified representations of the turbines and turbine control and did not consider the different operational regions. Another line of research instead focused on turbine-level control design using simplified wake models ~\cite{2017_VanWingerdenPaoAhoFleming_ActivePowerControlOfWakedWindFarms}. 

Turbine control for farm-level power tracking has traditionally relied on reducing the rotor speed set-point in existing maximum power point tracking algorithms. In practice, this is often implemented using a lookup table, which is incorporated as a perturbation to the rated rotor speed reference \cite{2014_JeongJohnsonFleming_ComparisonTestingPowerReserveControlStrategies, 2014_ElaGevorgianFlemingEtAl_ActivePowerControlsFromWindPower}. Because Regions 2 and 3 use different control variables, a transition region (Region 2.5) blends the Region 2 and Region 3 controllers to promote continuity in turbine control set-points \cite{2022_Abbas_ROSCO} \cite{2019_ZalkindPao_ConstrainedWindTurbineControl}. This blending has typically employed set-point smoothing, where an offset from the reference is applied so the controller reaches the appropriate switching value more quickly. 

In this work, we emphasize the co-design of turbine and farm-level control for use in a floating offshore wind farm setting. We build upon the control paradigm introduced in\cite{2025_AmplemanAyalaGayme_TowardsCollectiveControlOfFloatingOffshoreWindFarms} for turbines operating in Region 3 in two ways. We first introduce a novel Region 2 controller specifically designed for the power tracking problem. Unlike conventional Region 2 control approaches that modify maximum power point tracking through lookup-table perturbations or gain scheduling, the proposed controller directly regulates rotor speed, allowing turbine operation to be coordinated with farm-level power tracking objectives. Through the regulation of rotor speed, the turbine is implicitly de-rated by the farm level controller to achieve the desired power setpoint. This contrasts with other approaches that explicitly de-rate turbines in order to allow sufficient actuation authority to achieve farm-level tracking objectives. We then propose a Region 2.5 control that smoothly transitions between the regions using a sigmoid function based weighting on the generator torque and blade pitch control commands, respectively used in regions 2 and 3. We implement the proposed turbine-level control as an inner-loop within two different outer-loop   wind farm controllers, (1) a Nonlinear MPC (NL-MPC) that includes a nonlinear model of the floating wind turbines and dynamic model of the wake evolution, and (2) a Linear Parameter Varying Time Delay MPC (LPVTD-MPC) based on an LPV model of the wind turbine and a delay model for the wake propagation time. These two approaches represent different levels of model fidelity and corresponding computational costs. We evaluate the wind farm power tracking performance for the two approaches based on signals from the PJM ancillary services regulation market using the certification criteria for market participation~\cite{2020_PJM_Manual}. 



The remainder of this paper is organized as follows. In Section \ref{sec:multi_region_turbine_control}, the multi-region turbine-level controller is introduced. Section \ref{sec:windfarm_MandC} includes a description of the (NL-MPC) and (LPVTD-MPC) wind turbine and farm-level models alongside their respective control approaches. The PJM RegD certification criteria and scoring algorithms are detailed in Section \ref{sec:PJM_Cert_Criteria}. In Section \ref{sec:Sim_and_Results} we present simulation results of the two control approaches applied to three different PJM regulation signals for an eight-turbine farm with different inlet velocity profiles obtained from large eddy simulation data~\cite{LESGO}. The two farm-level control algorithms are scored against the PJM certification criteria. The paper concludes and provides a brief discussion of directions for future work in Section \ref{sec:conclusion}.

\section{Turbine Control} \label{sec:multi_region_turbine_control}

This section introduces the proposed Region 2 and 3 control approaches together with the blending function for Region 2.5. For turbine-level control, we treat the rotor speed command $\omega_{c}$ as an input which may come from wind speed estimators, lookup tables, farm-level controllers, etc.

In Region 2, the collective blade pitch, $\beta_{c}$, is set to zero while the generator torque is controlled. The control design is based on the following first order model of rotor speed dynamics with a stiff drivetrain. 
\begin{equation}
    \dot{\omega}_{r} = \frac{1}{J_{eq}}\left(T_{a}(\omega_{r}) - N_{G}T_{g}\right),
    \label{eq:drive_dynamics}
\end{equation}
where $J_{eq}$ is the entire drivetrain moment of inertia and $N_{G}$ is the gear ratio between the high and low speed shafts, i.e., $\omega_{g} = N_{G}\omega_{r}$. $T_{a}(\omega_{r}) = P(\omega_{r})/\omega_{r}$ is the aerodynamic torque and $T_{g}$ is the generator torque which we wish to design. The aerodynamic power, $P(\omega_{r})$ is computed as
\begin{equation}
    P(\omega_{r}) = 0.5\rho_{a}A_{r}C_{p}(\lambda(\omega_{r}),\beta_{c} = 0)||u_{\infty}||^{3}
\end{equation}
where $\rho_{a}$ is the density of air, $A_{r}$ is the rotor swept area, $C_{p}$ is the power coefficient, and $\lambda = R_{r}\omega_{r}/||u_{\infty}||$ is the tip speed ratio with turbine blade radius $R_{r}$ and the inlet wind speed for the turbine, $u_{\infty}$. 

Assuming piecewise constant rotor speed commands, the error dynamics can be written in terms of the rotor speed dynamics (i.e. $\dot{e} = \dot{\omega}_{r}$) as
\begin{equation} \label{eq:error_dynamics}
    \begin{bmatrix} e \\ \dot{e} \end{bmatrix} = \begin{bmatrix} 0 & 1 \\ -K_{I_{T}} & -K_{p} \end{bmatrix}\begin{bmatrix}
        \int e \\ e
    \end{bmatrix},
\end{equation}
where $e=\omega_{r} - \omega_{c}$ is the error between a rotor speed command, $\omega_{c}$, and the rotor speed, $\omega_{r}$. The error dynamics are then
\begin{equation}
      \dot{\omega}_{r} = -K_{I_{T}}\int e -K_{p}e.
\end{equation}
Filling in \eqref{eq:drive_dynamics} and solving for the generator torque yields
\begin{equation} \label{eq:region2_controller}
 T_{g} = \frac{J_{eq}}{N_{G}}\left(K_{I_{T}}\int e +  K_{p}e\right) + \frac{P}{N_{G}\omega_{r}}.
\end{equation}
The expression for $T_{g}$ in \eqref{eq:region2_controller} can be interpreted as a feedback-linearization controller where the gains $K_{I_{T}}$ and $K_{p}$ can be specified to ensure the autonomous system in \eqref{eq:error_dynamics} is Hurwitz. Implementing \eqref{eq:region2_controller} requires power and rotor speed measurements. As in standard turbine control applications, we assume rotor speed is available. The power measurement is taken on the generator side as
\begin{equation} \label{eq:gen_fdbk}
    P_{gen} = \nu_{G} N_{G} \omega_{r} T_{g},
\end{equation}
where $\nu_{G}$ is the generator efficiency. Then, plugging in \eqref{eq:gen_fdbk} into \eqref{eq:region2_controller} and simplifying yields
\begin{equation} \label{eq:r2_fdbk}
    T_{g} = \frac{J_{eq}}{\left(1 - \nu_{G}\right)N_{G}}\left(K_{I_{T}}\int e +  K_{p}e\right).
\end{equation}
We can then form the error dynamics with respect to \eqref{eq:r2_fdbk} as
\begin{equation} \label{eq:closed_loop_error}
    \dot{e} = \frac{1}{J_{eq}}T_{a}(\omega_{r}) - \frac{1}{1 - \nu_{G}}\left(K_{I_{T}}\int e + K_{p}e\right).
\end{equation}
The error dynamics in \eqref{eq:closed_loop_error} have equilibrium points $(\int e,e) = (\int e^*,0)$ with $\int e^*$ defined as
\[\int e^* = (1 - \nu_{G})/(J_{eq}K_{I_{T}})T_{a}(\omega_{c})\]
Then the dynamics in \eqref{eq:closed_loop_error} can be shifted to the origin by defining $\int\tilde{e} = \int e - \int e^*$ and manipulated to obtain 
\begin{equation}
    \dot{e} = \frac{1}{J_{eq}}\left(T_{a}(\omega_{r}) - T_{a}(\omega_{c}\right) - \frac{1}{1 - \nu_{G}}\left(K_{I_{T}}\int \tilde{e} + K_{p}e\right)
\end{equation}
\begin{proposition}
     Given a bounded rotor speed  $\omega_{r}\in\Omega$, where $\Omega$ is a compact set that defines the rotor speed limits; if $K_{I_{T}} > 0$ and $K_{p} > \left(1 - \nu_{G}\right)L/J_{eq}$ where $L$ satisfies 
    \begin{equation} \label{eq:lipshitz_bound}
        ||T_{a}(\omega_{c} + e) - T_{a}(\omega_{c})||\le L||e||,
    \end{equation}
    then the closed loop Region 2 error dynamics \eqref{eq:closed_loop_error} are asymptotically stable for all $\omega_{r}\in\Omega$.
\end{proposition}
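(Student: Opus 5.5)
We will use a quadratic Lyapunov argument on the shifted error dynamics. Write $x_1 = \int\tilde e$, $x_2 = e$, set $a = K_{I_T}/(1-\nu_G)$ and $b = K_p/(1-\nu_G)$, and let $g(e) = \frac{1}{J_{eq}}\bigl(T_a(\omega_c+e)-T_a(\omega_c)\bigr)$. The shifted system is then
\begin{equation*}
\dot x_1 = x_2, \qquad \dot x_2 = g(x_2) - a x_1 - b x_2 .
\end{equation*}
Its only equilibrium is the origin. By \eqref{eq:lipshitz_bound}, valid on the compact set $\Omega$ (where $T_a$ is continuously differentiable and hence Lipschitz), we have $|g(x_2)|\le (L/J_{eq})|x_2|$. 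The hypothesis $K_p > (1-\nu_G)L/J_{eq}$ is exactly $b > L/J_{eq}$. The nonlinearity therefore acts as a bounded-slope perturbation that the proportional damping dominates.

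Take the energy-like candidate
\begin{equation*}
V(x_1,x_2) = \tfrac{a}{2}x_1^2 + \tfrac12 x_2^2 ,
\end{equation*}
which is positive definite since $K_{I_T}>0$ and $\nu_G<1$. Along trajectories the cross terms $a x_1x_2$ cancel, leaving
\begin{equation*}
\dot V = x_2 g(x_2) - b x_2^2 \le -\bigl(b - L/J_{eq}\bigr)x_2^2 \le 0 .
\end{equation*}
This gives stability in the sense of Lyapunov but only negative semidefiniteness. We then apply LaSalle's invariance principle. On the set $\{\dot V = 0\}$ we have $x_2\equiv 0$, hence $\dot x_2 = -a x_1 = 0$, so $x_1 = 0$. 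The largest invariant set is the origin, and we conclude asymptotic stability. If a strict Lyapunov function is preferred, add a small cross term $\epsilon x_1x_2$ and choose $\epsilon$ small enough to make $\dot V$ negative definite, using $b > L/J_{eq}$ to absorb the $\epsilon x_1 g(x_2)$ term.

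The main obstacle is that the Lipschitz bound only holds while $\omega_r = \omega_c + e$ stays in $\Omega$. LaSalle needs a compact positively invariant set, and we will take a sublevel set $\{V\le c\}$ whose $x_2$-projection keeps $\omega_c + x_2$ inside $\Omega$. Such a sublevel set is forward invariant because $\dot V\le 0$. The conclusion is then asymptotic stability for all trajectories with $\omega_r\in\Omega$, as stated, rather than global stability. We will also check that the cross-term cancellation survives the $1/(1-\nu_G)$ scaling, which is why $V$ is weighted by $a$ rather than by $K_{I_T}$.
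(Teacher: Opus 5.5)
Your proof is correct and follows the paper's argument essentially line for line. The paper uses the same function $V=\frac{1}{2}\bigl(e^2+\frac{K_{I_T}}{1-\nu_G}\int\tilde e^{2}\bigr)$ (your weight $a$), obtains the same $\dot V = e\,g(e) - \frac{K_p}{1-\nu_G}e^2$ after the cross terms cancel, bounds it with $L$ to get the gain condition, and closes with the same LaSalle step ($e\equiv 0$ forces $\int\tilde e=0$). Your sublevel-set construction (which supplies the compact forward-invariant set on which the Lipschitz bound holds) and the optional cross-term strict Lyapunov function are refinements the paper omits. One caveat: the sublevel set certifies an estimate of the region of attraction, not convergence from every initial condition with $\omega_r(0)\in\Omega$.
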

\begin{proof}
    Given the function
    \[V = \frac{1}{2}\left(e^2 + \frac{K_{I_{T}}}{1 - \nu_{G}}\int\tilde{e}^{2}\right),\] where $V>0$ for $(e,\tilde{e})\neq (0,0)$ for the given conditions on $K_{I_{T}}$ and $K_p$ and $\nu_{G}<1$. 
    The derivative can be computed as
    \[\dot{V} = \frac{e}{J_{eq}}\left(T_{a}(\omega_{c} + e) - T_{a}(\omega_{c})\right) - \frac{K_{p}}{1 - \nu_{G}}e^{2}.\]
    We can upper bound $\dot{V}$ by substituting $L$ from \eqref{eq:lipshitz_bound} to arrive at the condition that $K_{p} > \left(1 - \nu_{G}\right)L/J_{eq}$, which enforces $\dot{V} \le 0$. The set $\left\{\left(e,\int \tilde{e}\right):\dot{V}(e,\int \tilde{e}) = 0\right\}$ contains the point $e=0$ and the trivial trajectory $\int \tilde{e}=0$ since, by inspection of the dynamics in \eqref{eq:closed_loop_error}, $\dot{e} = 0$ only at the origin $\int \tilde{e}=0$. Thus, by LaSalle's invariance theorem, the origin is asymptotically stable. 
\end{proof}

Region 3 employs the collective pitch control approach proposed in\cite{2025_AmplemanAyalaGayme_TowardsCollectiveControlOfFloatingOffshoreWindFarms} to track a rotor speed command while maintaining stability of the floating turbine and its control loops. This is implemented as a PI-LQR controller because the turbine dynamics near rated operation are well approximated by a linear model, allowing the stabilization and disturbance-rejection properties of LQR to be used.  We select a linear model at one operating point to design penalty matrices $Q$ and $R$, which are then used to design integrator and state feedback gains, respectively $K_{I_{\beta}}$ and $K_{x}$, at evenly spaced wind speed values throughout Region 3: $\left\{12, 14, ..., 24\right\} \ [m/s]$. The matrices $Q$ and $R$ are designed to limit bandwidth below the frequency of the negative damping loop associated with the turbine thrust and platform pitch coupling (see e.g.,\cite{ 2007Larsen_AMethodToAvoidNegativeDampedLowFrequentToiwerVibrations, 2008Jonkman_InfluenceOfControlOnThePitchDampingOfAFloatingWindTurbine}). The resulting PI controller for collective blade pitch, $\beta_{c}$, is given by
\begin{equation} \label{eq:region3_controller}
    \beta_{c} = -K_{I_{\beta}}\int e  -K_{x}\chi_{fdbk}.
\end{equation}
Here $\chi_{fdbk}$ is a subset of states from the turbine model selected for feedback~\cite{2025_AmplemanAyalaGayme_TowardsCollectiveControlOfFloatingOffshoreWindFarms}, $K_{I_{\beta}}$ and $K_{\chi}$ are the Region 3 controller integral gain, and state feedback gains, respectively. The states in $\chi_{fdbk}$ typically include platform surge and pitch, their rates, as well as the turbine rotor and generator speeds.

The Region 2.5 control comprises a blending function that combines both collective blade pitch and torque control such that the turbine inputs are $s(u)\beta_{c}$ and $(1-s(u))T_{g}$. The relative weights are specified via a sigmoid function
\begin{equation} \label{eq:switching_function}
    s(u) = \frac{1}{1 + e^{-k_{s}\left(u - u_{0}\right)}}. 
\end{equation}

OpenFAST simulations were used to validate the approach and tune the sigmoid parameters $k_s$ and $u_0$ so that the collective blade-pitch and generator-torque controllers were activated at the appropriate transition speed. Region 2 gains in \eqref{eq:r2_fdbk} were also tuned to reduce controller-induced oscillations from additional plant dynamics (i.e. floating platform dynamics) that were not considered in the aerodynamic model used for controller design. Formally characterizing the performance and stability of this controller over Region 2.5 is an area of ongoing work.

\section{Wind farm controller designs} \label{sec:windfarm_MandC}

We now introduce the farm-level control approaches that will be used to generate the turbine rotor speed commands described in Section \ref{sec:multi_region_turbine_control}. We consider a floating wind farm arranged on a rectangular lattice of rows and columns, and assume the columns are spaced far enough apart that spanwise interactions are negligible (i.e., only streamwise interactions between turbines are considered). Model-based MPC is employed as it enables definition of a control horizon such that the wake interaction models can generate predictions over a horizon long enough to capture wake propagation across the farm. The MPC formulation also enables coordinated multivariable control of turbine set-points, while explicitly enforcing turbine operational constraints. Two  formulations are considered: a nonlinear MPC (NL-MPC) that uses a dynamic wake model to capture turbine interactions and a nonlinear model of the  turbine dynamics from \cite{2018HomerNagamune_PhysicsBased3DControlOrientedModelingofFloatingWindTurbines},  and a reduced-order LPVTD-MPC formulation that represents wake propagation via time delays and approximates the nonlinear turbine dynamics using linear parameter-varying models.  Comparing these two approaches highlights the trade-off between model fidelity and computational tractability for wind farm power tracking.

\subsection{Wind Farm Models} \label{sec:wind_farm_models}

\subsubsection{Turbine Model (NL-MPC)}
As in our prior work~\cite{2025_AmplemanAyalaGayme_TowardsCollectiveControlOfFloatingOffshoreWindFarms}, we employ the floating offshore turbine model from \cite{2018HomerNagamune_PhysicsBased3DControlOrientedModelingofFloatingWindTurbines}, which models the NREL semi-submersible Phase II OC4 5MW reference turbine  \cite{2007Jonkman_DynamicsModelingAndLoadsAnalysisOfAnOffshoreFloatingWindTurbine,2014Robertson_DefinitionOfTheSemiSubmersibleFloatingSystemForPhaseIIofOC4}. The equations of motion are given by
\begin{equation} \label{eq:turbine_dynamics}
    \dot{\mathbf{\chi}} = f(\mathbf{\chi},\mathbf{\eta},\mathbf{u}),
\end{equation}
where $f$ is a nonlinear function of the control input $\mathbf{\eta}$, the incoming wind $\mathbf{u}$, and the states 
\begin{equation*}
\mathbf{\chi} = 
        \begin{bmatrix} 
        \mathbf{r} & \Theta & \Delta\theta_{r} & \dot{\mathbf{r}} & \dot{\Theta} & \omega_{r} & \omega_{g} 
        \end{bmatrix}^{T}.
\end{equation*}
Here the vector $\mathbf{r}$ includes the translational degrees of freedom (DOFs), namely surge, sway and heave. Similarly ${\Theta}$ is the vector containing the three-dimensional rotational DOFs (platform roll, pitch and yaw). The quantities $\dot{\mathbf{r}}$ and $\dot{{\Theta}}$ denote their respective time derivatives, and $\Delta\theta_{r}=\theta_r-\frac{1}{N_{G}}\theta_g$ describes the rotor azimuth state.  As previously noted in Section \ref{sec:multi_region_turbine_control}, $\omega_{r}$ and $\omega_{g}$ are the respective rotational velocities of the rotor and generator. The three-dimensional wind velocity vector $\mathbf{u}$ is in the turbine reference frame. A complete description of the equations of motion is given in \cite{2018HomerNagamune_PhysicsBased3DControlOrientedModelingofFloatingWindTurbines}.

The control vector 
\begin{equation} \nonumber
    \mathbf{\eta} = \left[ s(u)\beta_{c} \quad (1-s(u))T_{g} \quad \gamma \right], 
\end{equation}
includes the collective pitch of the turbine blades, $\beta_{c}$, the generator torque, $T_{g}$, and the yaw angle of the turbine nacelle, $\gamma$. In closed loop, the turbine model introduced in Section \ref{sec:multi_region_turbine_control} uses \eqref{eq:region3_controller} to control $\beta_{c}$ and \eqref{eq:r2_fdbk} to control $T_{g}$ with $\gamma$ fixed to be aligned with the streamwise velocity axis. The sigmoid switching function \eqref{eq:switching_function} is used to distribute the relative influence of the  $\beta_{c}$ and $T_{g}$ control signals as the wind speed transitions between regions 2 and 3.

\subsubsection{Wake Advection (NL-MPC)}
The wake dynamics are modeled using the advection diffusion model introduced in \cite{2017Shapiro_ModelBasedRecedingHorizonControlOfWindFarmsForSecondaryFrequencyRegulation}, which has been utilized in applications ranging from induction factor based control \cite{2025_AmplemanAyalaGayme_TowardsCollectiveControlOfFloatingOffshoreWindFarms,2019Shapiro_AWakeModelingParadigmForWindFarmDesignAndControl} to thrust vectoring for turbine relocation via yaw\cite{2021_KheirabadiNagamune_FOFSymDyn}. The model takes the form
\begin{align} \label{eq:wake_dynamics}
    \frac{\partial{\delta u_{n}}}{\partial{t}} + &u_{\infty,n}(x,t)\frac{\partial{\delta u_{n}}}{\partial{x}} = \nonumber \\
                                                    &-w_{n}(x,t)\delta u_{n}(x,t) + S_{n}(t)G(x - s_{x,n})
\end{align}
where $t$ is time, $x$ is the streamwise coordinate and $\delta u$ is the wake deficit caused by turbine $n$ which is then propagated downstream. Here we employ the hub height velocity at the inlet to turbine $n$, denoted $u_{\infty,n}$, as the advection velocity. The forcing term $S_{n}(t)$ is a function of the turbine thrust coefficient, $C_{T,n}$, and the inlet velocity, and $G(x - s_{x,n})$ is a Gaussian shaping function whose argument depends on the streamwise location of turbine $n$, denoted $s_{x,n}$. The boundary conditions include the initial velocity deficit at each turbine and the corresponding time derivative of the velocity deficit at the turbine location $x=s_{x,n}$, which are given by
\begin{equation}
    \delta u_{n}\rvert_{x = s_{x,n}} = u_{\infty,n}(1 - \sqrt{(1 - C_{T,n})}),
\end{equation}
\begin{equation}
    \frac{\partial{\delta u_{n}}}{\partial{t}} \biggr\rvert_{x = s_{x,n}} = 0.
\end{equation}
The streamwise velocity field is then calculated via
\begin{equation} \label{eq:velocity_field}
    u(\mathbf{x},t) = U_{\infty,p}(t) - \sum_{n} \hat{n}(t)\delta u_{n}(x,t)W_{n}(x),
\end{equation}
where $W_n(x)$ is a super-Gaussian wake shape function detailed in \cite{2019Shapiro_AWakeModelingParadigmForWindFarmDesignAndControl}, $U_{\infty,p}(t)$ is the incoming velocity to the turbine furthest upstream for the given column $p$ (i.e., at the front of the wind farm with respect to the turbine of interest). The summation represents superposition of the upstream wakes, and $\hat{n}(t)=\mathbf{v}_{n}/||\mathbf{v}_{n}||$ is a unit vector that describes the direction of the incoming wind with respect to a vector normal to the hub of the turbine\cite{2018HomerNagamune_PhysicsBased3DControlOrientedModelingofFloatingWindTurbines}. 
\subsubsection{Turbine Model (LPVTD-MPC)} We chose to evaluate an
LPV model of the turbine as a low order approximation due to its ability to closely approximate the nonlinear turbine dynamics by linearizing the model at a variety of operating points. Similar to the approach in \cite{2025_AmplemanAyalaGayme_TowardsCollectiveControlOfFloatingOffshoreWindFarms}, the nonlinear model in \cite{2018HomerNagamune_PhysicsBased3DControlOrientedModelingofFloatingWindTurbines} was linearized at a range of operating points to capture the turbine dynamics in closed loop with \eqref{eq:r2_fdbk} and \eqref{eq:region3_controller}  in the regions of interest. Specifically, the array of linear models was generated by capturing rotor speed and wind speed magnitude at even increments throughout the operational range: $\left\{8, 8.5, ..., 12\right\} \ [rpm]$ and $\left\{8, 10, ..., 24\right\} \ [m/s]$. Each turbine $i$ was then represented by an LPV model of the following continuous time form:
\begin{equation} \label{eq:LPV_turbine_dynamics}
    \dot{\chi}_{cl,i} = A(\omega_{r,i},||u_{\infty,i}||)\chi_{cl,i} + B(\omega_{r,i},||u_{\infty,i}||)\psi_{cl,i}
\end{equation}
\begin{equation}
    \zeta_{cl,i} = C(\omega_{r,i},||u_{\infty,i}||)\chi_{cl,i} + D(\omega_{r,i},||u_{\infty,i}||)\psi_{cl,i},
\end{equation}
where $\psi_{cl,i}^{k} = \begin{bmatrix} \omega_{c,i} &  u_{\infty,x_{i}} & u_{\infty,y_{i}} & u_{\infty,z_{i}} \end{bmatrix}^{T}$ contains the rotor-speed command from the farm-level controller together with the wind inlet velocity for turbine $i$. The subscript ``$cl$'' on $\chi, \psi$, and $\zeta$ indicate that they are the states, inputs, and outputs for the closed loop LPV turbine-level model.

\subsubsection{Wake Advection (LPVTD-MPC)}

A model similar to those in \cite{2025_AmplemanAyalaGayme_TowardsCollectiveControlOfFloatingOffshoreWindFarms} \cite{2021Starke_NetworkBasedEstimationofWindFarmPowerandVelocityDataUnderChangingWindDirections} is used to propagate velocity deficits throughout the farm. The deficit from turbine $i$ to a downstream turbine $j$ separated by a streamwise distance $\varkappa {(i,j)}$ is given by 
\begin{equation} \label{eq:initial_deficit}
    \delta u_{0,i} = u_{\infty,i}\left(1 - \sqrt{1 - C_{T,i}}\right)
\end{equation}
\begin{equation}
    d_{w,(i,j)} = 1 + k_{w,i} \ln\left( 1 + \exp(\varkappa{(i,j)}/(R\sqrt{2}))\right)
\end{equation}
\begin{equation}
    \eta_{\alpha,j}^{k} = \sum_{i\in\mathcal{G}_{j}}\frac{\delta u_{0,i}}{d_{w,(i,j)}^{2}}\left(1 + erf(\varkappa{(i,j)}/(R\sqrt{2}))\right),
\end{equation}
where $d_{w,(i,j)}$ is the wake diameter and $k_{w}$ is the wake growth coefficient. The summation over $\mathcal{G}$ includes all turbines which impart a velocity deficit affecting turbine $j$. This information is then propagated using a first order time delay model 
\begin{equation} \label{eq:time_delay_advection}
    \chi_{\alpha}^{k+1} = \mathcal{A}^{k}\chi_{\alpha}^{k} + \left(\mathcal{A}^{k} - 1\right)\frac{\tau^{k}}{2}\nu_{\alpha}^{k}
\end{equation}
\begin{equation}
    \zeta_{\alpha}^{k} = \frac{4}{\tau^{k}} \chi_{\alpha}^{k} - \nu_{\alpha}^{k}
\end{equation}
\begin{equation}
    u_{\infty,j}^{k} = U_{\infty}^{k} - \zeta_{\alpha}^{k}.
\end{equation}
\begin{equation} \label{eq:advection_update}
    U_{a,j}^{k+1} = u_{\infty,j}^{k}\sqrt{1 - C_{T,j}^{k}}.
\end{equation}
where $\mathcal{A}^{k} = \exp\left(-\Delta t/2\tau^{k}\right)$ and $\tau^{k} = \varkappa{(i,j)}/||U_{a,i}^{k}||$. Here, the subscript ``$\alpha$'' on $\chi, \nu$, and $\zeta$ indicate that they are associated with the velocity deficit time delay system dynamics. 

\subsection{Farm-Level Controller} \label{sec:farm_level_controller}

The farm-level controller computes optimal rotor speed commands for each turbine for each farm-level control sample period (i.e., at the specified update rate). Commands are optimized over $N_{C}$ control horizons, each of duration $T_{C}$. These parameters are designed so that the prediction horizon $(N_{C}T_{C})$ captures turbine settling time and the duration necessary for wakes to propagate from the front to the back of the farm. The cost function for farm-level power tracking contains  a quadratic power tracking error term and a quadratic rotor speed regularization term, and is defined as
\begin{align} \label{eq:horizon_cost}
J_{q} = &Q_{\omega}\sum_{i=1}^{N_{T}}\left(b_{\omega}^{T}\chi_{i}(t_{q}) - \omega_{c,i}^{q}\right)^{2} + \nonumber \\ & \int_{t_{q}}^{t_{q+1}}\left\{Q_{e}\left(P_{sp}(\tau) - \sum_{i=1}^{N_{T}} b_{P}^{T}\zeta_{i}(\tau)\right)^{2}\right\}d\tau,
\end{align}
where $N_{T}$ is the number of turbines in the wind farm and each $q\in \left\{1,..,N_{C}-1\right\}$ represents one control horizon and $P_{sp}$ is the farm level power setpoints from the grid service market operator. The regularization term seeks to minimize the difference between the command and the rotor speed at the beginning of each horizon. This is done in an effort to maintain continuity in commands as well as to improve convexity of the cost function. The variables $b_{P}$ and $b_{\omega}$ contain all zeros except for the index corresponding to the variable of interest (i.e. power or rotor speed). $Q_{e}$ and $Q_{\omega}$ are weights for the respective tracking error and regularization term. The optimization problems for the two farm-level control schemes are given by
\begin{align}
\tag{NL-MPC}
\bm{\omega}_{c} = \arg\min_{\bm{\omega}_{c}^{q}\in\Omega} \quad &    \sum_{q = 1}^{N_{C}-1}J_{q}(t,\omega_{c}^{q}) \nonumber\\ 
\textrm{s.t.} \quad & \eqref{eq:turbine_dynamics}, \ \eqref{eq:wake_dynamics}   \nonumber \\
\nonumber \\
\tag{LPVTD-MPC}
\bm{\omega}_{c} = \arg\min_{\bm{\omega}_{c}^{q}\in\Omega} \quad &     \sum_{q = 1}^{N_{C}-1}J_{q}(t,\omega_{c}^{q}) \nonumber\\
\textrm{s.t.} \quad & \eqref{eq:LPV_turbine_dynamics}, \ \eqref{eq:time_delay_advection}, \ \eqref{eq:advection_update}   \nonumber
\end{align}
where $q = \left\{1,...,N_{C} - 1\right\}$ and $\Omega:=\left\{\omega_{c}:\omega_{c}\in\left[8 \ 12\right] \text{rpm}\right\}$. After each optimization problem is solved, only the first horizon's command is used (i.e., $\bm{\omega}_{c}^{q=1}$).

\section{Frequency Regulation in the PJM Market}

The cascaded wind-farm control strategies described in Sections \ref{sec:multi_region_turbine_control} and \ref{sec:windfarm_MandC} are evaluated against certification scoring criteria used by regional transmission operator (RTO) PJM for their RegD frequency regulation market (see Section 4 of \cite{2020_PJM_Manual})\footnote{During the review process of this paper, the criteria for participation in PJM regulation markets were in flux. Here, we use the recently (10/2025) archived criteria and signal definitions.}. These criteria are described in Section \ref{sec:PJM_Cert_Criteria} and then used to evaluate control performance for three 40 minute test signals in Section \ref{sec:Sim_and_Results}.

\subsection{PJM RegD Certification Criteria} \label{sec:PJM_Cert_Criteria}
PJM market qualification criteria uses an aggregate measure of precision, delay, and correlation scores. The precision score is computed over the entire test duration as
\begin{equation}
    S_{P} = 1 - \frac{1}{N}\sum_{i = 1}^{N}\frac{|P^{i}_{gen} - P^{i}_{sp}|}{\overline{P_{sp}}},
\end{equation}
where $P_{sp}$  denotes the power setpoint,  $P_{gen}$ is the raw output power and $\overline{P_{sp}}$ is the setpoint averaged over the test period. 
The delay and correlation scores are respectively computed together as
\begin{equation}
    S_{D}(\delta) = \left|\frac{\delta - T_{W}}{T_{W}}\right|,
\end{equation}
\begin{equation} \label{eq:corr_coeff}
    S_{C}(x,y) = \frac{\frac{1}{N}\sum_{i = 1}^{N}\left(x_{i} - \bar{x}\right)\left(y_{i} - \bar{y}\right)}{\sigma_{x}\sigma_{y}}. 
\end{equation}
These values are a function of the sub-interval $\delta$ within the five minute window $T_{W} = 300 \ [sec]$ that maximizes the sum of the two scores, i.e.,
\begin{equation} \label{eq:delta_star}
\begin{aligned}
\delta^{*} = \arg\max_{\delta} \quad &     S_{D}(\delta) + S_{C}(P_{gen},P_{sp})|_{t_{0} + \delta}^{t_{f}+\delta}. 
\end{aligned}
\end{equation}
Once $\delta^{*}$ is determined, the aggregate score for that five minute interval is computed as
\begin{equation}
    S = \frac{1}{3}\left(S_{D}(\delta^{*}) + S_{C}(P_{gen},P_{sp})|_{t_{0} + \delta^{*}}^{t_{f}+\delta^{*}} + S_{P}\right).  
    \label{eq:scoring}
\end{equation}
The $S_D$ and $S_C$ signals are sampled every 10 seconds and scoring is performed over five minute intervals. These interval scores are then averaged over the full regulation period to obtain a composite score for the test. Certification requires an aggregate score  of 0.75 or better for three consecutive tests of at least 40 minutes in duration.

\subsection{Simulation Results} \label{sec:Sim_and_Results}

 We perform simulations with three different regulation signals comprising the standard PJM RegD testing signal and two 40 minute signals from the RegD archive \cite{PJM_Ancillary_Services}. All signals from \cite{PJM_Ancillary_Services} are normalized between $[-1,1]$, here we scale the regulation  to have a range of  $\pm$3MW and impose a  mean of 30MW so that the regulation signal is 10\% of the mean. An inlet wind profile for each column is obtained from the precursor domain from a large eddy simulation (LES) of a wind farm using the LESGO code \cite{LESGO}. Each three dimensional velocity component ($u$, $v$, $w$) is calculated by performing disk averaging over the turbine rotor for the first turbine in the given turbine column. Three wind inlet signals are generated by scaling by the original $u$, $v$, $w$ components. 
 The  wind inlet signals and RegD tracking signals are shown in Figures \ref{fig:PJM_wind_speeds} and \ref{fig:regulation_signals}, respectively. Here the different test signals and the associated wind inlet are indicated by color. For Scenario A the wind scale factor is 1 for both columns.  Scenario B wind data has a scale factor of 1.1 for column 1 and 1.0 for column 2, while Scenario C wind data has a scale factor of 1.0 for column 1 and 1.2 for column 2.

 \begin{figure}[ht] 
    \centering
    \includegraphics[scale=0.45]{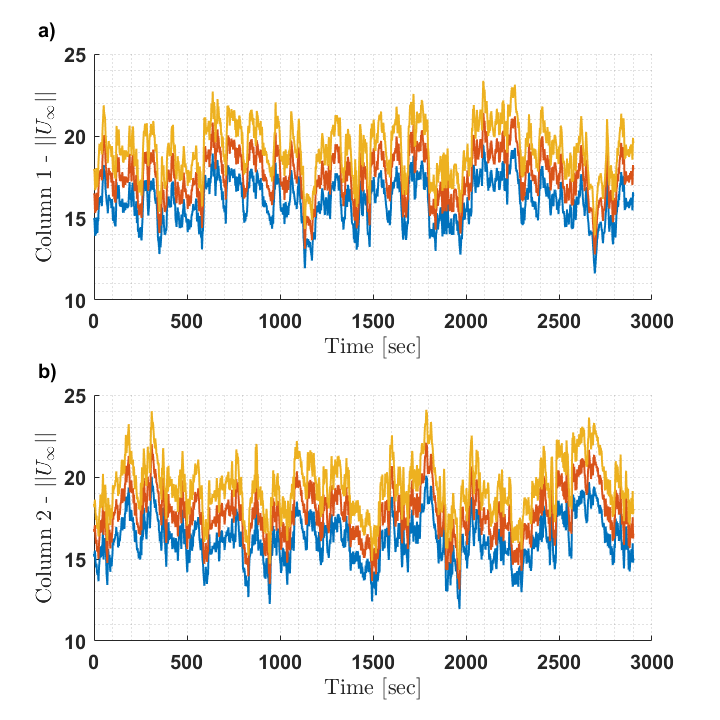}
        \vspace{-25pt}
    \caption{Three wind scenarios are generated by scaling two LES inlet wind profiles by factors $\left\{1.0, 1.1, 1.2\right\}$. Scenario A uses $\left\{1.0,1.0\right\}$ blue in a), b), Scenario B uses $\left\{1.1,1.0\right\}$ orange in a), blue in b), and Scenario C uses $\left\{1.0,1.2\right\}$ blue in a), yellow in b). This incoming wind is in Region 3, so the first turbine in each column sees Region 3 wind speeds but wake effects push the remaining turbines into regions 2 and 2.5.}
\label{fig:PJM_wind_speeds}
\end{figure}

\begin{figure}[ht] 
    \centering
    \includegraphics[scale=0.45]{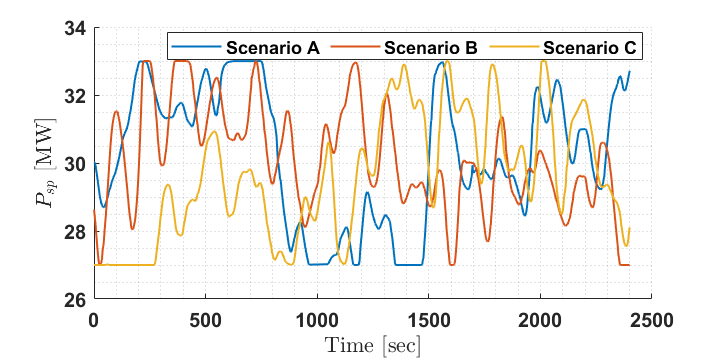}
            \vspace{-20pt}
    \caption{The three test signals from \cite{PJM_Ancillary_Services}, which have been scaled by factor of 3MW and then added to a mean demand of 30MW. }
\label{fig:regulation_signals}
\end{figure}

The (NL-MPC) and (LPVTD-MPC) farm-level controllers introduced in Section \ref{sec:windfarm_MandC} are applied to an eight-turbine wind farm arranged in four rows and two columns. The plant dynamics are described by \eqref{eq:turbine_dynamics} and \eqref{eq:wake_dynamics}. The turbine level controller gains were tuned in OpenFAST based on the NREL semi-submersible Phase II OC4 5MW reference turbine.
The turbine states and inflow wind are assumed available for feedback at both turbine and farm control levels. The MPC control horizon duration was set to $T_C = 50,[\text{sec}]$ with $N_C = 5$ intervals, with weights $Q_{e}=1$ and $Q_{\omega}=0.01$. The values for the Region 2.5 sigmoid function were set to $k_{s} = 5$ and $u_{0} = 11.5 \ [m/s]$. Each simulation runs for 2900 seconds with the first 500 seconds used as farm startup time. 
 
Matlab's \textit{fmincon} function with the interior-point algorithm enabled was used to solve the optimization problems (NL-MPC) and (LPVTD-MPC). Simulations were run on a standard Dell XPS 15 PC with parallel processing for finite difference gradient computation during optimization using 14 cores. The average solve time was 945 seconds for (NL-MPC) and 465 seconds for (LPVTD-MPC). Although (LPVTD-MPC) was able to reduce the amount of time to complete an MPC iteration by nearly a factor of 2 when compared to (NL-MPC), additional gains in processing speed are necessary for real-time control applications.

Figure \ref{fig:all_scenarios_tracking} shows the tracking performance for the two control approaches, with the aggregate PJM score indicated in the legend. Wind farm power tracks the reference signal well in all cases, with some fluctuations in power output likely due in some part to the turbulence in the wind inflow seen in Figure \ref{fig:PJM_wind_speeds}. Both (LPVTD-MPC) and (NL-MPC) achieve passing certification scores of at least 89.9\% for all scenarios, which is well above the 0.75 PJM market qualification requirement. In Scenarios A, B, and C, the precision scores for $\{\text{(NL-MPC) }, \text{(LPVTD-MPC)}\}$ were $\{0.986, 0.979\}$, $\{0.986, 0.981\}$, and $\{0.984, 0.979\}$, respectively. These precision scores of nearly 1 indicate that the tracking error was low with respect to the average value of the setpoint signal.

Figure \ref{fig:PJM_Scoring}a) shows the delay and correlation scores for each scenario. The delay scores were fairly high throughout all of the tests, with the highest combined correlation and delay scores associated with small $\delta^{*}$ \eqref{eq:delta_star}. The correlation scores were the lowest, with (LPVTD-MPC) consistently scoring lower than (NL-MPC), likely due to the larger power fluctuations visible in Figure \ref{fig:all_scenarios_tracking}.

As noted above, power-output fluctuations naturally arise due to the turbulent nature of the incoming wind. The same turbulent inflow, together with the control induced variations, may increase fatigue loading on the turbine blades and structures, affecting turbine performance and operational lifespan. Evaluating these effects and including fatigue mitigation as a control goal is an area of interest of future work but is outside the scope of this paper.

\begin{figure}[ht] 
    \centering
    \includegraphics[scale=0.45]{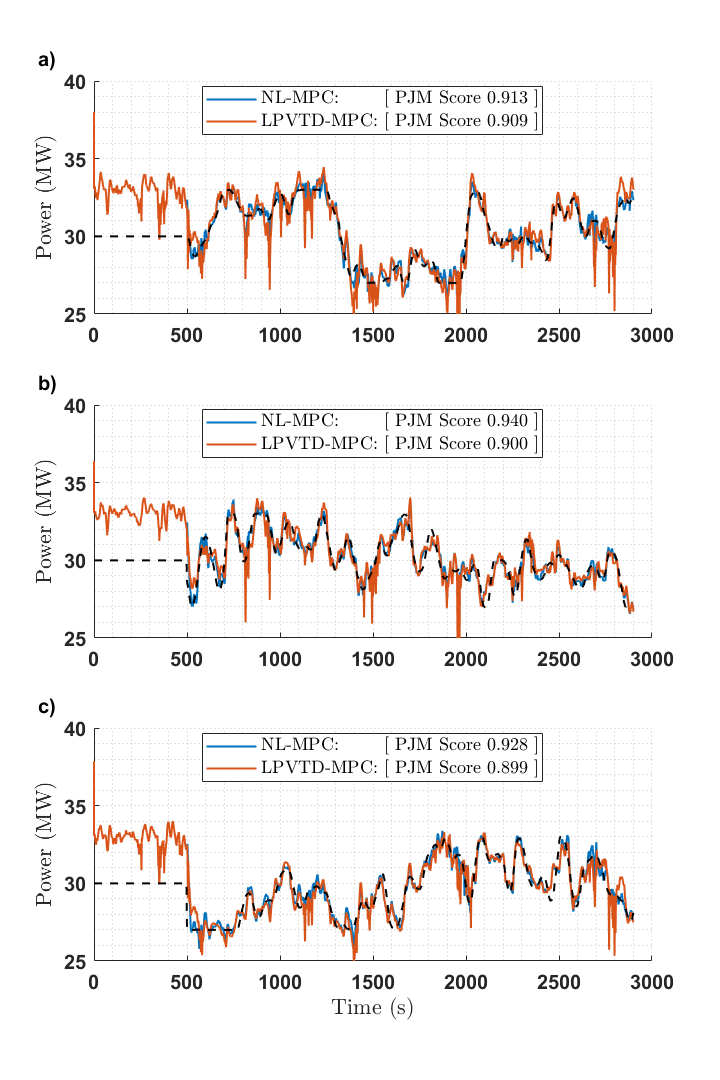}
        \vspace{-40pt}
    \caption{Controller performance of both NL-MPC and LPVTD-MPC farm controllers for a) Scenario A, b) Scenario B and c) Scenario C. For all cases farm-level control begins after 500 seconds. Both controllers achieve composite scores above the qualification threshold. As expected, LPVTD-MPC performs slightly worse than NL-MPC, and also appears to be more sensitive to disturbances.}
    \label{fig:all_scenarios_tracking}
\end{figure}
\begin{figure}[ht] 
    \centering
    \includegraphics[scale=0.45]{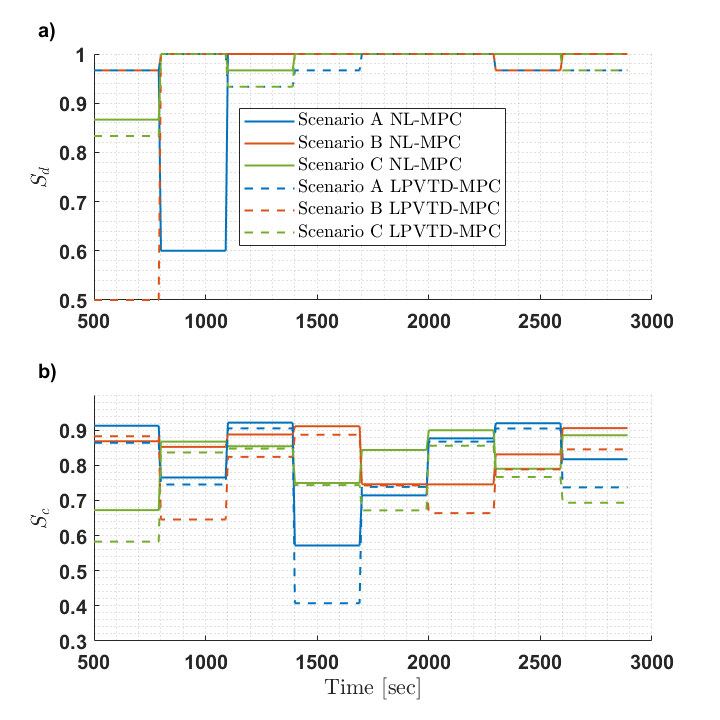}
            \vspace{-25pt}
    \caption{PJM Scoring for all three scenarios is shown. In a) the precision score is shown for each scenario to be near 1. In b) the delay score is lower initially in some cases but improves over time. In c) the correlation score is shown to have the most variability, with LPVTD-MPC consistently scoring lower than NL-MPC.}
\label{fig:PJM_Scoring}
\end{figure}

\section{Conclusion} \label{sec:conclusion}
This paper describes the integration of a multi-region turbine-level controller within farm-level power tracking algorithms. Region 2 control is accomplished through the use of a novel generator torque control law which smoothly progresses into a Region 3 controller \cite{2025_AmplemanAyalaGayme_TowardsCollectiveControlOfFloatingOffshoreWindFarms} based on a sigmoid transition function. This controller is integrated into two closed-loop floating offshore wind farm control algorithms for power tracking over multiple wind speed regions. The two approaches are shown to achieve power tracking that meets PJM RegD ~\cite{PJM_Ancillary_Services} certification criteria, earning scores of at least 89.9\% for all three scenarios, which greatly exceeds the minimum threshold of 75\%.

There are several avenues for future work. The approach presented here assumes perfect sensor feedback, which will never be achieved in practice. Future work will evaluate the impact of uncertainty, specifically with respect to the preview wind information \cite{2022_RobeyLundquist_BehaviorAndMechanismsOfDopplerWindLidarError}. The present work took an important step toward real-world implementation constraints by not assuming the entire PJM RegD signal is available at each MPC iteration, and instead using only the setpoint available at that time instant to inform the optimal rotor speed commands over the prediction horizon. However, this required more frequent MPC iterations to maintain sufficient tracking accuracy, leading to increased computational costs that would inhibit real-time implementation. Future work will investigate faster gradient computation methods for the farm-level MPC optimization problem, such as adjoint methods, that have been used in similar contexts~\cite{2018Meyers_DynamicStrategiesForYawAndInductionControlOfWindFarms,2022_VanDenBoekVanWingerden_AdjointOptimisationForWindFarmFlowControl}. At the turbine level, future work will address controller inefficiencies associated with rapid switching between operating regions as wind speed varies.

\bibliographystyle{ieeetr}
\bibliography{windfarmcontrol}

\end{document}